\documentclass[letterpaper, 10 pt, conference]{ieeeconf}  
\IEEEoverridecommandlockouts                              
\overrideIEEEmargins                                      

\usepackage[utf8]{inputenc}
\usepackage{amsmath}
\usepackage{amsthm}
\usepackage{mathrsfs}
\usepackage{amssymb}
\usepackage{color}
\usepackage{cite} 
\usepackage[font=footnotesize]{caption}
\usepackage{comment}
\usepackage{float}
\usepackage{graphicx}
\usepackage{float}
\usepackage{lscape}

\usepackage{bm}
\usepackage{booktabs}
\usepackage{longtable,tabularx,ragged2e}
\usepackage[super]{nth}
\usepackage{multicol}
\usepackage{booktabs}
\usepackage{lscape}
\usepackage{nicefrac}
\usepackage[hidelinks]{hyperref}
\usepackage{mathtools}
\usepackage{siunitx}
\usepackage{algorithm}
\usepackage[noend]{algpseudocode}


\newtheorem{theo}{Theorem}
\newtheorem{prop}{Proposition}

\newtheorem{defn}{Definition}

\DeclareMathOperator{\rank}{rank}

\newcommand{\isdef}{\triangleq}

\newcommand{\eigmax}{\boldsymbol{\lambda_{\rm max}}}
\newcommand{\eigmin}{\boldsymbol{\lambda_{\rm min}}}

\newcommand{\rmT}{{\rm T}}

\newcommand{\BBN}{{\mathbb N}}

\newcommand{\BBR}{{\mathbb R}}


\algrenewcommand\algorithmicrequire{\textbf{Initialize:}}

\newcounter{example}
\newenvironment{example}[1][]{\refstepcounter{example}\par\medskip
   \noindent \textbf{\indent Example~\theexample. #1} \rmfamily}{\medskip}
   
\DeclareMathOperator*{\argmin}{arg\,min}
\newcommand{\Mod}[1]{\, \mathrm{mod}\, #1}

\setlength\intextsep{2mm}


\newenvironment{lema}[1]{
  
  \lemalt
}{\endlemalt}

\title{Recursive Least Squares with Fading Regularization\\ for Finite-Time Convergence without Persistent Excitation}

\author{\large Brian Lai, Dimitra Panagou, and Dennis S. Bernstein 
\thanks{ Brian Lai, Dimitra Panagou, and Dennis S. Bernstein are with the Department of Aerospace Engineering, University of Michigan, Ann Arbor, MI, USA. {\tt\small \{brianlai, dpanagou, dsbaero\}@umich.edu}}
}

\begin{document}
\maketitle

\begin{abstract}
This paper extends recursive least squares (RLS) to include time-varying regularization.
%
%
%
This extension provides flexibility for updating the least squares regularization term in real time.
Existing results with constant regularization
%
imply that the parameter-estimation error dynamics of RLS are globally attractive to zero if and only the regressor is weakly persistently exciting.
This work shows that, by extending classical RLS to include a time-varying (fading) regularization term that converges to zero, the parameter-estimation error dynamics are globally attractive to zero without weakly persistent excitation.
Moreover, if the fading regularization term converges to zero in finite time, then the parameter estimation error also converges to zero in finite time.
Finally, we propose rank-1 fading regularization (R1FR) RLS, a time-varying regularization algorithm with fading regularization that converges to zero, and which runs in the same computational complexity as classical RLS.
%
%
%
Numerical examples are presented to validate theoretical guarantees and to show how R1FR-RLS can protect against over-regularization.
\end{abstract}


\section{Introduction}

Classical recursive least squares (RLS) regularizes the parameter estimate with a fixed weighting selected by the user \cite{islam2019recursive}.
Regularization provides numerical stability during startup and reduces the variance of the estimator, at the cost of a bias in the estimate \cite{lai2021regularization}, known as the bias-variance tradeoff within the context of batch least squares \cite[pp. 223-224]{hastie2009elements}.
%
%
However, as more data are collected, it may be desirable to update the regularization weighting based on new external knowledge. 
%
%
%
It may also be desirable to remove regularization entirely in order to remove the regularization-induced bias. 
This can be accomplished using time-varying regularization.

Versions of RLS with time-varying regularization are considered in \cite{ali2016stability,mahadi2022recursive}.
The method in \cite{mahadi2022recursive} uses a recursion to minimize the least squares cost with time-varying regularization.
This method uses a first-order Taylor series approximation of the covariance update, which results in an approximate solution. 
%
%
%
Moreover, computing the resulting quadratic term of the expansion requires $\mathcal{O}(n^3)$ complexity, whereas classical RLS is run in $\mathcal{O}(pn^2)$ complexity, where $n$ is the number of parameters, $p$ is the number of measurements per step, and typically $n \gg p$.
A higher order expansion may achieve better approximation at the expense of computational complexity.
%
%
The method developed in \cite{ali2016stability} uses  time-varying regularization with a sliding window of data,  that is, a fixed number of  data points.
This method has the same computational complexity as classical RLS, and conditions are given for  convergence of parameter estimation error to zero.
However, a major assumption the analysis of \cite{ali2016stability} is that the data are persistently exciting.

Persistent excitation is a key condition in stability analysis of parameter-estimation methods, and significant effort has been put toward weakening the PE condition \cite{parikh2019integral, panteley2001relaxed, barabanov2017global, aranovskiy2016parameters, bruce2021necessary}.
In particular, \cite{bruce2021necessary} presents \textit{weakly persistent excitation} as a necessary and sufficient condition for global attractivity of parameter estimation error to zero in RLS .
This present paper shows, however,   that  weakly persistent excitation is only necessary for overcoming of effect of regularization-induced bias.

This work shows that, by extending classical RLS to include a time-varying regularization term that converges to zero,  the parameter-estimation error dynamics are globally attractive to zero without weakly persistent excitation.
In particular, it is sufficient that the data collected attains full rank.
Moreover, if the regularization term is designed to converge to zero in finite time, then the parameter estimation error also converges to zero in finite time.
Finally, we show that, if the difference between regularization matrices in successive steps is of rank 1, then RLS with time-varying regularization can be implemented with the same computational complexity as classical RLS.
We present an algorithm, called \textit{rank-1 fading-regularization RLS} (R1FR-RLS), which satisfies the aforementioned conditions.

\section{Batch Least Squares with Time-Varying Regularization}
For all $k \ge 0$, let $y_k \in \BBR^p$ be the measurement, let $\phi_k \in \BBR^{p \times n}$ be the regressor, and let positive-definite $\Gamma_k \in \BBR^{p \times p}$ be the weighting matrix.
Moreover, for all $k \ge 0$, let positive-definite $R_k \in \BBR^{n \times n}$ be the regularization matrix and let $\theta_{{\rm reg},k}$ be the a regularization parameter.
For all $k \ge 0$, define the least squares cost with time-varying regularization, $J_{k} \colon \BBR^{n} \rightarrow \BBR$, as
\begin{align}
\label{eqn: Jk definition}
     J_{k}(\hat\theta) &\triangleq \sum_{i=0}^{k}( y_i -  \phi_i \hat\theta )^\rmT \Gamma_i ( y_i - \phi_i \hat\theta) \nonumber
     \\
     & \hspace{50pt} + (\hat\theta- \theta_{{\rm reg},k})^\rmT  R_k (\hat\theta- \theta_{{\rm reg},k}).
\end{align}
For all $k \ge 0$, denote the global minimizer of the function $J_k$, defined in \eqref{eqn: Jk definition}, by 
\begin{align}
    \theta_{k+1} \isdef \underset{ \hat\theta \in \BBR^n  }{\operatorname{argmin}} \ J_{k}(\hat\theta).\label{eqn: Jk minimizer}
\end{align}
It follows that, for all $k \ge 0$, the batch least squares minimizer of \eqref{eqn: Jk definition} is given by
\begin{align}
    \theta_{k+1} = (R_k + \sum_{i=0}^{k} \phi_i^\rmT \Gamma_i \phi_i)^{-1}( R_k \theta_{{\rm reg},k} + \sum_{i=0}^k \phi_i^\rmT \Gamma_i y_i).
    \label{eqn: time-varying regularization batch least squares}
\end{align}
Note that since the least squares cost \eqref{eqn: Jk definition} is minimized in batch, the classical least squares solution \eqref{eqn: time-varying regularization batch least squares} can be used.
We still present batch solution \eqref{eqn: time-varying regularization batch least squares} however, as it will be useful for later analysis.

\section{Recursive Least Squares with Time-Varying Regularization}
Theorem \ref{theo: time-varying regularization RLS} provides a statement and proof of RLS with time-varying regularization.
This algorithm recursively computes the minimizer of the cost function \eqref{eqn: Jk definition}.
\begin{theo}
\label{theo: time-varying regularization RLS}
For all $k\in \BBN_0$, let $\phi_k\in \BBR^{p\times n}$, let $y_k \in\BBR^p$, and let $\Gamma_k \in \BBR^{p \times p}$ be positive semidefinite.
Next, for all $k \ge 0$, let $\theta_{{\rm reg},k} \in \BBR^{n}$ and let $R_k \in \BBR^{n\times n}$ be positive semidefinite and satisfy
\begin{align}
\label{eqn: variable regularization pos def condition}
    R_k + \sum_{i=0}^k \phi_i^\rmT \Gamma_i \phi_i \succ 0.
\end{align}
Then, for all $k \ge 0$, $J_{k} \colon \BBR^{n} \rightarrow \BBR$, defined in \eqref{eqn: Jk definition}, has a unique global minimizer, denoted as \eqref{eqn: Jk minimizer}.
Furthermore, $\theta_1$ is given by
\begin{align}
    P_1^{-1} &= R_0 + \phi_0^\rmT \Gamma_0 \phi_0,
    \label{eqn: variable regularization P1 update}
    \\
    \theta_1 &= \theta_{{\rm reg},0} + P_1 \phi_0^\rmT \Gamma_0( y_0 - \phi_0 \theta_{{\rm reg},0}),
    \label{eqn: variable regularization theta1 update}
\end{align}
and, for all $k \ge 1$, $\theta_{k+1}$ is given by
\begin{align}
     P_{k+1}^{-1} &= P_{k}^{-1} + \phi_k^\rmT \Gamma_k \phi_k + R_k - R_{k-1},
     \label{eqn: variable regularization Pk update}
    \\
    \theta_{k+1} &= \theta_k + P_{k+1}[\phi_k^\rmT \Gamma_k (y_k - \phi_k \theta_k) + R_k(\theta_{{\rm reg},k} - \theta_k) \nonumber
    \\
    & \hspace{70pt} - R_{k-1}(\theta_{{\rm reg},k-1} - \theta_k)].
    \label{eqn: variable regularization thetak update}
\end{align}
\end{theo}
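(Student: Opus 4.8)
The plan is to work entirely from the closed-form batch minimizer \eqref{eqn: time-varying regularization batch least squares}, which is already established, and to extract the recursion by differencing consecutive steps. First I would settle existence and uniqueness: the cost $J_k$ in \eqref{eqn: Jk definition} is quadratic in $\hat\theta$ with Hessian $2(R_k + \sum_{i=0}^k \phi_i^\rmT \Gamma_i \phi_i)$, which is positive definite by hypothesis \eqref{eqn: variable regularization pos def condition}. Hence $J_k$ is strictly convex and coercive, so it has a unique global minimizer, obtained by setting the gradient to zero, which recovers \eqref{eqn: time-varying regularization batch least squares}.

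Next I would introduce the shorthand $P_{k+1}^{-1} \isdef R_k + \sum_{i=0}^k \phi_i^\rmT \Gamma_i \phi_i$, so that \eqref{eqn: time-varying regularization batch least squares} rearranges into the information-form identity $P_{k+1}^{-1}\theta_{k+1} = R_k \theta_{{\rm reg},k} + \sum_{i=0}^k \phi_i^\rmT \Gamma_i y_i$. The base case $k=0$ is then a direct computation: substituting $\theta_1 = \theta_{{\rm reg},0} + P_1 \phi_0^\rmT \Gamma_0(y_0 - \phi_0 \theta_{{\rm reg},0})$ into $P_1^{-1}\theta_1$ and expanding, the $\phi_0^\rmT \Gamma_0 \phi_0 \theta_{{\rm reg},0}$ terms cancel and leave exactly $R_0 \theta_{{\rm reg},0} + \phi_0^\rmT \Gamma_0 y_0$, verifying \eqref{eqn: variable regularization P1 update}--\eqref{eqn: variable regularization theta1 update}.

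For the general step $k \ge 1$, I would write the information-form identity at indices $k+1$ and $k$ and subtract. Differencing $P_{k+1}^{-1} = R_k + \sum_{i=0}^k \phi_i^\rmT \Gamma_i \phi_i$ against its index-$k$ counterpart immediately yields the covariance recursion \eqref{eqn: variable regularization Pk update}. Differencing the information vectors gives $P_{k+1}^{-1}\theta_{k+1} - P_k^{-1}\theta_k = \phi_k^\rmT \Gamma_k y_k + R_k \theta_{{\rm reg},k} - R_{k-1}\theta_{{\rm reg},k-1}$. The one maneuver that takes care is replacing $P_k^{-1}$ by $P_{k+1}^{-1} - \phi_k^\rmT \Gamma_k \phi_k - R_k + R_{k-1}$ using the just-derived covariance recursion, after which the $P_{k+1}^{-1}\theta_k$ contribution combines with $P_{k+1}^{-1}\theta_{k+1}$ into $P_{k+1}^{-1}(\theta_{k+1} - \theta_k)$; regrouping the remaining terms into the residual $y_k - \phi_k \theta_k$ and the regularization differences $R_k(\theta_{{\rm reg},k} - \theta_k)$ and $R_{k-1}(\theta_{{\rm reg},k-1} - \theta_k)$, and left-multiplying by $P_{k+1}$, produces \eqref{eqn: variable regularization thetak update}.

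No step presents a genuine obstacle; the entire argument is linear algebra. The only place to be careful is the sign bookkeeping in the substitution of $P_k^{-1}$ via the covariance recursion, since the $R_k - R_{k-1}$ increment, which is absent in classical RLS, must be tracked consistently through both the covariance and the gain updates; a single sign slip there would corrupt the regularization-difference terms in \eqref{eqn: variable regularization thetak update}.
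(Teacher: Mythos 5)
Your proposal is correct and follows essentially the same route as the paper's proof: both identify $P_{k+1}^{-1}$ with the Hessian $R_k + \sum_{i=0}^k \phi_i^\rmT \Gamma_i \phi_i$, difference the normal equations at consecutive steps, and substitute the covariance recursion \eqref{eqn: variable regularization Pk update} to regroup into the residual term $\phi_k^\rmT\Gamma_k(y_k-\phi_k\theta_k)$ and the regularization-difference terms. The only cosmetic difference is that you phrase the step via the information-form identity $P_{k+1}^{-1}\theta_{k+1} = R_k\theta_{{\rm reg},k} + \sum_{i=0}^k \phi_i^\rmT\Gamma_i y_i$ obtained from \eqref{eqn: time-varying regularization batch least squares}, whereas the paper writes $J_k(\hat\theta) = \hat\theta^\rmT A_k\hat\theta + 2b_k^\rmT\hat\theta + c_k$ and substitutes $b_{k-1} = -A_{k-1}\theta_k$; these are algebraically the identical maneuver.
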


\begin{proof}
To begin, we show that \eqref{eqn: variable regularization P1 update} and \eqref{eqn: variable regularization theta1 update} hold. Note that $J_0 \colon \BBR^n \rightarrow \BBR$ can be written as
\begin{align*}
    J_{0}(\hat\theta) = \hat{\theta}^\rmT A_0 \hat{\theta} + 2 b_0^\rmT \hat{\theta} + c_0,
\end{align*}
where
\begin{align*}
    A_0 &\triangleq \phi_0^\rmT \Gamma_0 \phi_0 + R_0, 
    \\
    b_0 &\triangleq - \phi_0^\rmT \Gamma_0 y_0 - R_0 \theta_{{\rm reg},0},
    \\
    c_0 &\triangleq y_0^\rmT \Gamma_0 y_0 + \theta_{{\rm reg},0}^\rmT R_0 \theta_{{\rm reg},0}.
\end{align*}
It follows from \eqref{eqn: variable regularization pos def condition} that $A_0$ is positive definite, thus nonsingular. 
Defining $P_1 \triangleq A_0^{-1}$ yields that
\begin{align*}
    P_1^{-1} &= \phi_0^\rmT \Gamma_0 \phi_0 + R_0.
\end{align*}
Hence, \eqref{eqn: variable regularization P1 update} is satisfied.
Furthermore, it follows from Lemma \ref{lem: quadratic cost minimizer} that $J_0$ has a unique global minimizer $\theta_1$ given by
\begin{align*}
    \theta_1 &= -A_0^{-1} b_0
    \\
    &= P_1(\phi_0^\rmT \Gamma_0 y_0 + R_0 \theta_{{\rm reg},0})
    \\
    &= P_1[\phi_0^\rmT \Gamma_0 y_0 + (P_1^{-1} - \phi_0^\rmT \Gamma_0 \phi_0) \theta_{{\rm reg},0}]
    \\
    &= \theta_{{\rm reg},0} + P_1(\phi_0^\rmT \Gamma_0 y_0 -  \phi_0^\rmT \Gamma_0 \phi_0 \theta_{{\rm reg},0}) 
    \\
    &= \theta_{{\rm reg},0} + P_1 \phi_0^\rmT \Gamma_0 ( y_0 - \phi_0 \theta_{{\rm reg},0} ).
\end{align*}
Hence, \eqref{eqn: variable regularization theta1 update} is satisfied.

Next, we show that \eqref{eqn: variable regularization Pk update} and \eqref{eqn: variable regularization thetak update} hold by induction on $k \ge 1$.
Note that $J_1 \colon \BBR^n \rightarrow \BBR$ can be written as
\begin{align*}
    J_{1}(\hat\theta) = \hat{\theta}^\rmT A_1 \hat{\theta} + 2 b_1^\rmT \hat{\theta} + c_1
\end{align*}
where
\begin{align*}
    A_1 &\triangleq \phi_0^\rmT \Gamma_0 \phi_0 + \phi_1^\rmT \Gamma_1 \phi_1 + R_1, 
    \\
    b_1 &\triangleq - \phi_0^\rmT \Gamma_0 y_0 - \phi_1^\rmT \Gamma_1 y_1 - R_1 \theta_{{\rm reg},1},
    \\
    c_1 &\triangleq y_0^\rmT \Gamma_0 y_0 + y_1^\rmT \Gamma_1 y_1 + \theta_{{\rm reg},1}^\rmT R_1 \theta_{{\rm reg},1}.
\end{align*}
It follows from \eqref{eqn: variable regularization pos def condition} that $A_1$ is positive definite, thus nonsingular. Furthermore, substituting in the definitions of $A_0$ and $b_0$, it follows that
\begin{align*}
    A_1 
    &= A_0 +  \phi_1^\rmT \Gamma_1 \phi_1 + R_1 - R_0 .
    \\
    b_1 
    &= b_{0} - \phi_1^\rmT \Gamma_1 y_1 - R_1 \theta_{{\rm reg},1} + R_0 \theta_{{\rm reg},0}.
\end{align*}
Defining $P_2 \triangleq A_1^{-1}$, it follows that \eqref{eqn: variable regularization Pk update} holds for $k=1$.
Next, it follows from Lemma \ref{lem: quadratic cost minimizer} that $J_1$ has a unique global minimizer $\theta_2$ given by
\begin{align*}
    \theta_{2} &= -A_1^{-1} b_1 
    \\
    &= A_1^{-1}[\phi_1^\rmT \Gamma_1 y_1 + R_1 \theta_{{\rm reg},1} - R_0 \theta_{{\rm reg},0} - b_{0}]
    \\
    &= A_1^{-1}[\phi_1^\rmT \Gamma_1 y_1 + R_1 \theta_{{\rm reg},1} - R_0 \theta_{{\rm reg},0} + A_{0} \theta_1]
    \\
    &= A_1^{-1}[\phi_1^\rmT \Gamma_1 y_1 + R_1 \theta_{{\rm reg},1} - R_0 \theta_{{\rm reg},0}
    \\
    & \hspace{90pt} + (A_{1} - \phi_1^\rmT \Gamma_1 \phi_1 - R_1 + R_{0}) \theta_1]
    \\
    &= \theta_1 + A_1^{-1}[\phi_1^\rmT \Gamma_1 y_1 + R_1 \theta_{{\rm reg},1} - R_0 \theta_{{\rm reg},0}
    \\
    & \hspace{90pt} + (- \phi_1^\rmT \Gamma_1 \phi_1 - R_1 + R_{0}) \theta_1]
    \\
    &= \theta_k + P_2[\phi_1^\rmT \Gamma_1 (y_1 - \phi_1 \theta_1) + R_1(\theta_{{\rm reg},1} - \theta_1) 
    \\
    & \hspace{90pt} - R_0(\theta_{{\rm reg},0} - \theta_1)].
\end{align*}
Hence, \eqref{eqn: variable regularization thetak update} is satisfied for $k = 1$.

Now, let $k \ge 2$. Then, $J_k \colon \BBR^n \rightarrow \BBR$ can be written as
\begin{align*}
    J_{k}(\hat\theta) = \hat{\theta}^\rmT A_k \hat{\theta} + 2 b_k^\rmT \hat{\theta} + c_k
\end{align*}
where
\begin{align*}
    A_k &\triangleq \sum_{i=0}^{k} \phi_i^\rmT \Gamma_i \phi_i + R_k, 
    \\
    b_k &\triangleq - \sum_{i=0}^k \phi_i^\rmT \Gamma_i y_i - R_k \theta_{{\rm reg},k}
    \\
    c_k &\triangleq \sum_{i=0}^k y_i^\rmT \Gamma_i y_i + \theta_{{\rm reg},k}^\rmT R_k \theta_{{\rm reg},k}
\end{align*}
Furthermore, $A_k$ and $b_k$ can be written recursively as 
\begin{align*}
    A_{k} &= A_{k-1} + \phi_k^\rmT \Gamma_k \phi_k + R_k - R_{k-1},
    \\
    b_k &= b_{k-1} - \phi_k^\rmT \Gamma_k  y_k - R_k \theta_{{\rm reg},k} + R_{k-1} \theta_{{\rm reg},k-1}.
\end{align*}
It follows from \eqref{eqn: variable regularization pos def condition} that $A_k$ is positive definite, thus nonsingular. Defining $P_{k+1} \triangleq A_k^{-1}$, it follows that $\eqref{eqn: variable regularization Pk update}$ is satisfied.
Furthermore, it follows from Lemma \ref{lem: quadratic cost minimizer} that $J_k$ has a unique global minimizer $\theta_{k+1}$ given by
\begin{align*}
    & \theta_{k+1} = -A_k^{-1} b_k 
    \\
    &= A_k^{-1}[\phi_k^\rmT \Gamma_k y_k + R_k \theta_{{\rm reg},k} - R_{k-1} \theta_{{\rm reg},k-1} -b_{k-1}]
    \\
    &= A_k^{-1}[\phi_k^\rmT \Gamma_k y_k + R_k \theta_{{\rm reg},k} - R_{k-1} \theta_{{\rm reg},k-1} + A_{k-1} \theta_k]
    \\
    &= A_k^{-1}[\phi_k^\rmT \Gamma_k y_k + R_k \theta_{{\rm reg},k} - R_{k-1} \theta_{{\rm reg},k-1} 
    \\
    & \hspace{70pt} + (A_{k} - \phi_k^\rmT \Gamma_k  \phi_k - R_k + R_{k-1}) \theta_k]
    \\
    &= \theta_k + A_k^{-1}[\phi_k^\rmT \Gamma_k y_k + + R_k \theta_{{\rm reg},k} - R_{k-1} \theta_{{\rm reg},k-1}
    \\
    & \hspace{70pt} + (- \phi_k^\rmT \Gamma_k \phi_k - R_k + R_{k-1}) \theta_k]
    \\
    &= \theta_k + P_{k+1}[\phi_k^\rmT \Gamma_k (y_k - \phi_k \theta_k) + R_k (\theta_{{\rm reg},k} - \theta_k) 
    \\
    & \hspace{70pt} - R_{k-1} (\theta_{{\rm reg},k-1} - \theta_k)].
\end{align*}
Hence, \eqref{eqn: variable regularization thetak update} is satisfied.
\end{proof}

\section{Global Attractivity}

For the analysis of this section, we make the assumption that there exists $\theta \in \BBR^n$ such that, for all $k \ge 0$, 
\begin{align}
    \label{eqn: linear measurement process no noise}
    y_k = \phi_k \theta.
\end{align}
This assumption of a linear model without noise is a commonly made assumption in analysis of RLS \cite{bruce2021necessary,barabanov2017global,lai2024generalized} and will aid in comparing classical RLS to time-varying regularization RLS.
Next, for all $k \ge 1$, define the parameter estimation error $\tilde{\theta}_k \in \BBR^n$ as
\begin{align}
\label{eqn: parameter estimaton error defn}
    \tilde{\theta}_k \triangleq \theta_k - \theta.
\end{align}
Proposition \ref{prop: time-varying regularization error dynamics} gives an expression for the error dynamics of time-varying regularization RLS.

\begin{prop}
\label{prop: time-varying regularization error dynamics}
    Consider the assumptions and notation of Theorem \ref{theo: time-varying regularization RLS}. 
    Assume there exists $\theta \in \BBR^n$ such that, for all $k \ge 0$, \eqref{eqn: linear measurement process no noise} holds.
    Finally, for all $k \ge 1$, define $\tilde{\theta}_k \in \BBR^n$ by \eqref{eqn: parameter estimaton error defn}.
    Then, for all $k \ge 1$, $\tilde{\theta}_{k+1}$ can be expressed as
    \begin{align}
    \label{eqn: time-varying regularization error dynamics}
        \tilde{\theta}_{k+1} = P_{k+1}[ P_k^{-1} \tilde{\theta}_k & + R_k(\theta_{{\rm reg},k} - \theta) \nonumber
        \\
        & - R_{k-1}(\theta_{{\rm reg},k-1} - \theta)]
    \end{align}
\end{prop}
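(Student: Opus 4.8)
The plan is to derive the claimed error dynamics directly from the recursive estimate update \eqref{eqn: variable regularization thetak update}, subtracting $\theta$ from both sides and exploiting the noise-free model \eqref{eqn: linear measurement process no noise}. Since \eqref{eqn: time-varying regularization error dynamics} is simply a restatement of the $\theta_{k+1}$ recursion in error coordinates, no induction is required; the argument is a self-contained algebraic manipulation carried out for each fixed $k \ge 1$.

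First I would subtract $\theta$ from \eqref{eqn: variable regularization thetak update} and use the definition \eqref{eqn: parameter estimaton error defn} to identify the leading term $\theta_k - \theta$ as $\tilde{\theta}_k$. The key substitutions rewrite the three bracketed terms in error coordinates: the noise-free model \eqref{eqn: linear measurement process no noise} gives $y_k - \phi_k \theta_k = -\phi_k \tilde{\theta}_k$, while $\theta_{{\rm reg},k} - \theta_k = (\theta_{{\rm reg},k} - \theta) - \tilde{\theta}_k$ and, analogously, $\theta_{{\rm reg},k-1} - \theta_k = (\theta_{{\rm reg},k-1} - \theta) - \tilde{\theta}_k$. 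Collecting terms, the bracket becomes $(-\phi_k^\rmT \Gamma_k \phi_k - R_k + R_{k-1})\tilde{\theta}_k + R_k(\theta_{{\rm reg},k} - \theta) - R_{k-1}(\theta_{{\rm reg},k-1} - \theta)$.

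The one step that requires attention is recombining the coefficients of $\tilde{\theta}_k$. I would write the free term $\tilde{\theta}_k$ as $P_{k+1} P_{k+1}^{-1} \tilde{\theta}_k$ and factor $P_{k+1}$ out of the entire expression, so that the coefficient of $\tilde{\theta}_k$ inside the resulting bracket is $P_{k+1}^{-1} - \phi_k^\rmT \Gamma_k \phi_k - R_k + R_{k-1}$. Invoking the covariance recursion \eqref{eqn: variable regularization Pk update} identifies this coefficient as exactly $P_k^{-1}$, which collapses the expression to the desired form \eqref{eqn: time-varying regularization error dynamics}.

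I expect no genuine obstacle here: the only subtlety is careful bookkeeping of signs when shifting the regressor and regularization terms into error coordinates, together with the recognition that \eqref{eqn: variable regularization Pk update} is precisely the identity needed to fold the accumulated coefficient of $\tilde{\theta}_k$ back into $P_k^{-1}$. No hypotheses beyond those of Theorem \ref{theo: time-varying regularization RLS} and the noise-free assumption \eqref{eqn: linear measurement process no noise} are used.
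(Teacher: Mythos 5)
Your proposal is correct and follows essentially the same route as the paper's proof: substitute the noise-free model into \eqref{eqn: variable regularization thetak update}, shift to error coordinates, and use the recursion \eqref{eqn: variable regularization Pk update} to identify the accumulated coefficient of $\tilde{\theta}_k$ with $P_k^{-1}$. The only cosmetic difference is that the paper pre-multiplies \eqref{eqn: variable regularization Pk update} by $P_{k+1}$ to rewrite $I_n - P_{k+1}(\phi_k^\rmT \Gamma_k \phi_k + R_k - R_{k-1})$ as $P_{k+1}P_k^{-1}$, whereas you insert $P_{k+1}P_{k+1}^{-1}$ in front of $\tilde{\theta}_k$ and factor out $P_{k+1}$ — the same identity applied in a different order.
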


\begin{proof}
    Let $k \ge 1$. Substituting \eqref{eqn: linear measurement process no noise} into \eqref{eqn: variable regularization thetak update}, it follows that
    \begin{align*}
        \theta_{k+1} &= \theta_k + P_{k+1}[\phi_k^\rmT \Gamma_k \phi_k (\theta - \theta_k) + R_k(\theta_{{\rm reg},k} - \theta_k) \nonumber
        \\
        & \hspace{70pt} - R_{k-1}(\theta_{{\rm reg},k-1} - \theta_k)].
    \end{align*}
    Subtracting $\theta$ from both sides and noting that $\theta_{{\rm reg},k} - \theta_k = (-\tilde{\theta}_k + \theta_{{\rm reg},k} - \theta)$ and $\theta_{{\rm reg},k-1} - \theta_k = (-\tilde{\theta}_k + \theta_{{\rm reg},k-1} - \theta)$ yields that
    \begin{align*}
        \tilde{\theta}_{k+1} &= \tilde{\theta}_k + P_{k+1}[- \phi_k^\rmT \Gamma_k \phi_k \tilde{\theta}_k + R_k(-\tilde{\theta}_k + \theta_{{\rm reg},k} - \theta)
        \\
        & \hspace{80pt} - R_{k-1}(-\tilde{\theta}_k + \theta_{{\rm reg},k-1} - \theta)].
    \end{align*}
    Rearranging term, it follows that
    \begin{align}
    \label{eqn: time-varying regularization error dynamics temp 1}
        \tilde{\theta}_{k+1} &= [I_n - P_{k+1}(\phi_k^\rmT \Gamma_k \phi_k + R_k - R_{k-1})] \tilde{\theta}_k
        \nonumber
        \\
        & + P_{k+1} [R_k(\theta_{{\rm reg},k} - \theta) - R_{k-1}(\theta_{{\rm reg},k-1} - \theta)].
    \end{align}
    Next, pre-multiplying both sides of \eqref{eqn: variable regularization Pk update} by $P_{k+1}$ and rearranging terms yields that
    \begin{align}
    \label{eqn: time-varying regularization error dynamics temp 2}
        P_{k+1}P_k^{-1} = I_n - P_{k+1}(\phi_k^\rmT \Gamma_k \phi_k + R_k - R_{k-1}).
    \end{align}
    Finally, substituting \eqref{eqn: time-varying regularization error dynamics temp 2} into \eqref{eqn: time-varying regularization error dynamics temp 1} yields \eqref{eqn: time-varying regularization error dynamics}.
\end{proof}

In RLS analysis, it is desirable to show that the parameter estimation error converges to zero. 
This is often done by showing that the zero equilibrium of the estimation error dynamics is globally attractive \cite{bruce2021necessary,lai2024generalized}.
However, the error dynamics of RLS with time-varying regularization, \eqref{eqn: time-varying regularization error dynamics}, do not have an equilibrium at zero. 
For this reason, Definition \eqref{defn: globally attractive to a point} defines global attractivity to a point, without assuming that point is an equilibrium of the system.
We also define global finite-time attractivity to a point, which indicates that convergence happens in finite time.

\begin{defn}
\label{defn: globally attractive to a point}
    Let $f\colon\BBN_0 \times \BBR^n\to\BBR^n$ and consider the system
    \begin{align}
        x_{k+1} = f(k,x_k),
        \label{eq:NonLinSys}
    \end{align}
    where, for all $k \ge 0$, $x_k \in \BBR^n$. Let $x \in \BBR^n$. The system \eqref{eq:NonLinSys} is \textbf{globally attractive} to $x$ if, for all $k_0  \ge 0$ and $x_{k_0} \in \BBR^n$, $\lim_{k \to \infty} x_k = x$.
    Furthermore, if , for all $k_0  \ge 0$ and $x_{k_0} \in \BBR^n$, there exists $N$ such that, for all $k \ge N$, $x_k = x$, then the system \eqref{eq:NonLinSys} is \textbf{globally finite-time attractive} to $x$.
    
    Finally, we say \eqref{eq:NonLinSys} is globally (finite-time) attractive to zero if it is globally (finite-time) attractive to $0_{n \times 1}$.
\end{defn}

Note that if there exists positive definite $R \in \BBR^{n \times n}$ and $\theta_0 \in \BBR^n$ such that, for all $k \ge 0$, $R_k = R$ and $\theta_{{\rm reg},k} = \theta_0$, then \eqref{eqn: time-varying regularization error dynamics} simplifies to the error dynamics of classical RLS.
Theorem \ref{theo: WPE result from Bruce 2021} gives a necessary and sufficient condition, \eqref{eqn: WPE condition}, for the global attractivity to zero of the classical RLS error dynamics.
The condition \eqref{eqn: WPE condition} is called \textit{weakly persistent excitation} in \cite{bruce2021necessary}, as it is a weaker condition than the classical \textit{persistent excitation} condition (e.g. Definition 3 of \cite{bruce2021necessary}).

\begin{theo}
\label{theo: WPE result from Bruce 2021}
    Consider the assumptions and notation of Proposition \ref{prop: time-varying regularization error dynamics}.
    If there exists positive definite $R \in \BBR^{n \times n}$ and $\theta_0 \in \BBR^n$ such that, for all $k \ge 0$, $R_k = R$ and $\theta_{{\rm reg},k} = \theta_0$, then the \eqref{eqn: time-varying regularization error dynamics} is globally attractive to zero if and only if 
    \begin{align}
    \label{eqn: WPE condition}
        \lim_{k \rightarrow \infty} \eigmin\left[ \sum_{i=0}^k \phi_i^\rmT \Gamma_i \phi_i \right] = \infty.
    \end{align}
\end{theo}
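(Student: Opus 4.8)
The plan is to exploit the fact that with constant regularization the forcing terms in the error dynamics cancel, reducing \eqref{eqn: time-varying regularization error dynamics} to a homogeneous linear time-varying recursion whose convergence is governed entirely by the covariance $P_{k+1}$. First I would substitute $R_k = R_{k-1} = R$ and $\theta_{{\rm reg},k} = \theta_{{\rm reg},k-1} = \theta_0$ into \eqref{eqn: time-varying regularization error dynamics}. The last two terms then become $R(\theta_0 - \theta) - R(\theta_0 - \theta) = 0$, leaving $\tilde\theta_{k+1} = P_{k+1}P_k^{-1}\tilde\theta_k$. Telescoping this relation from an initial time $k_0$ gives $\tilde\theta_{k+1} = P_{k+1}P_{k_0}^{-1}\tilde\theta_{k_0}$, where $P_{k_0}^{-1}$ is nonsingular by \eqref{eqn: variable regularization pos def condition}.

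Next I would obtain a closed form for $P_{k+1}$. Setting $R_k = R_{k-1} = R$ in \eqref{eqn: variable regularization Pk update}, together with the initialization \eqref{eqn: variable regularization P1 update}, an induction on $k$ yields $P_{k+1}^{-1} = R + \Sigma_k$, where $\Sigma_k \isdef \sum_{i=0}^k \phi_i^\rmT \Gamma_i \phi_i$. Since each $\phi_i^\rmT \Gamma_i \phi_i$ is positive semidefinite, $\Sigma_k$ is nondecreasing in the positive-semidefinite order, hence $\eigmin[\Sigma_k]$ is nondecreasing and its limit exists in $[0,\infty]$; condition \eqref{eqn: WPE condition} is precisely the statement that this limit equals $+\infty$.

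The core of the argument is then to show that global attractivity to zero is equivalent to $\norm{P_{k+1}} \to 0$. For sufficiency, if $\norm{P_{k+1}} \to 0$, then $\norm{\tilde\theta_{k+1}} \le \norm{P_{k+1}}\,\norm{P_{k_0}^{-1}\tilde\theta_{k_0}} \to 0$ for every $k_0$ and every $\tilde\theta_{k_0}$. For necessity, fixing $k_0$ and letting $w \isdef P_{k_0}^{-1}\tilde\theta_{k_0}$ range over all of $\BBR^n$ (possible since $P_{k_0}^{-1}$ is nonsingular and $\tilde\theta_{k_0}$ is arbitrary), global attractivity forces $P_{k+1} w \to 0$ for every $w$; applying this to the standard basis shows every column of $P_{k+1}$ vanishes, and by equivalence of norms in finite dimensions $\norm{P_{k+1}} \to 0$. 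Because $P_{k+1}$ is positive definite, $\norm{P_{k+1}} = 1/\eigmin[P_{k+1}^{-1}] = 1/\eigmin[R + \Sigma_k]$, so $\norm{P_{k+1}} \to 0$ if and only if $\eigmin[R + \Sigma_k] \to \infty$.

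Finally, I would strip off the fixed offset $R$ using Weyl's inequality: since $R$ is positive definite, $\eigmin[\Sigma_k] \le \eigmin[R + \Sigma_k] \le \eigmin[\Sigma_k] + \eigmax[R]$, so $\eigmin[R+\Sigma_k] \to \infty$ if and only if $\eigmin[\Sigma_k] \to \infty$, which is exactly \eqref{eqn: WPE condition}. I expect the main obstacle to be the necessity direction of the norm characterization, namely carefully converting the ``for all initial conditions'' clause of global attractivity into uniform decay of $\norm{P_{k+1}}$ rather than mere pointwise decay along a single trajectory, while verifying that the bounded perturbation $R$ does not affect the divergence of the minimum eigenvalue.
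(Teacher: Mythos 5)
Your proof is correct, but it takes a genuinely different route from the paper. The paper's proof is essentially a citation: it invokes Theorem 3 of Bruce et al.\ (global asymptotic stability of the zero equilibrium iff weak persistent excitation) together with their Proposition 1 (Lyapunov stability holds unconditionally) to strip the stability half off the equivalence, and it merely asserts that those results "can easily be extended" from $\Gamma_k = I_p$ to general positive-semidefinite $\Gamma_k$. You instead give a self-contained argument: under constant $R$ and $\theta_{{\rm reg},k}$ the forcing terms in \eqref{eqn: time-varying regularization error dynamics} cancel, the recursion telescopes to $\tilde\theta_{k+1} = P_{k+1}P_{k_0}^{-1}\tilde\theta_{k_0}$ with $P_{k+1}^{-1} = R + \sum_{i=0}^{k}\phi_i^\rmT\Gamma_i\phi_i$, and attractivity for all initial conditions is equivalent (via the standard basis and equivalence of norms, using nonsingularity of $P_{k_0}^{-1}$) to $\Vert P_{k+1}\Vert \to 0$, i.e.\ to $\eigmin[R + \sum_{i=0}^{k}\phi_i^\rmT\Gamma_i\phi_i]\to\infty$; Weyl's inequality then removes the bounded offset $R$. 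Each step checks out, including the necessity direction you flagged as the main obstacle. What your approach buys is independence from the cited reference and an argument that covers general weighted $\Gamma_k$ without hand-waving; it also sidesteps the Lyapunov-stability detour entirely, since the explicit solution formula makes the equivalence of "globally attractive" and "gain matrix tends to zero" immediate. What the paper's approach buys is brevity and an explicit link to the weak-persistent-excitation literature it is positioning itself against.
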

\begin{proof}
    This proof uses results from \cite{bruce2021necessary}, which only considers the case where, for all $k \ge 0$, $\Gamma_k = I_p$. 
    The results of \cite{bruce2021necessary} can easily be extended to the case of positive-definite $\Gamma_k$.
    
    If follows from Theorem 3 of \cite{bruce2021necessary} that the zero equilibrium of \eqref{eqn: time-varying regularization error dynamics} is globally asymptotically stable (Lyapunov stable and globally attractive) if and only if \eqref{eqn: WPE condition} holds. 
    Next, it follows from Proposition 1 of \cite{bruce2021necessary} that  that the zero equilibrium of \eqref{eqn: time-varying regularization error dynamics} is Lyapunov stable, without any assumptions on the regressor $(\phi_k)_{k=0}^\infty$. 
    Hence, the zero equilibrium of \eqref{eqn: time-varying regularization error dynamics} is globally attractive if and only if \eqref{eqn: WPE condition} holds.\footnote{This implication is also briefly discussed in the paragraph before Proposition 1 of \cite{bruce2021necessary}.}
    Finally, the zero equilibrium of \eqref{eqn: time-varying regularization error dynamics} is globally attractive if and only of \eqref{eqn: time-varying regularization error dynamics} is globally attractive to zero.
\end{proof}

Next, Theorem \ref{theo: time varying regularization RLS global attractivity} concerns global attractivity of the error dynamics of RLS with time-varying regularization. 
It is shown that if the regularization term is designed to converge to zero, then the much weaker condition \eqref{eqn: RLS variable regulariztion rank condition} is sufficient for global attractivity.
Hence, weakly persistent excitation is not necessary for the error dynamics of RLS with time-varying regularization to be globally attractive to zero.
Furthermore, if the regularization term is designed to eventually be equal to zero, then the error dynamics of RLS with time-varying regularization are globally finite-time attractive to zero.

\begin{theo}
\label{theo: time varying regularization RLS global attractivity}
    Consider the assumptions and notation of Proposition \ref{prop: time-varying regularization error dynamics}.
    If $\lim_{k \rightarrow \infty} R_k = 0_{n \times n}$ and $(\Vert \theta_{{\rm reg},k} \Vert ) _{k=0}^\infty$ is bounded, and there exists $k_{\rm rank} \ge 0$ such that
    \begin{align}
    \label{eqn: RLS variable regulariztion rank condition}
        \eigmin\left[ \sum_{i=0}^{k_{\rm rank}} \phi_i^\rmT \Gamma_i \phi_i \right] > 0,
    \end{align}
    then \eqref{eqn: time-varying regularization error dynamics} is globally attractive to zero.
    If additionally, there exists $k_{\rm cut} \ge k_{\rm rank}$ such that, for all $k \ge k_{\rm cut}$, $R_k = 0_{n \times n}$, then \eqref{eqn: time-varying regularization error dynamics} is globally finite-time attractive to zero.
    
\end{theo}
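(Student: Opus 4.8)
The plan is to turn the affine recursion \eqref{eqn: time-varying regularization error dynamics} into a telescoping identity by premultiplying by $P_{k+1}^{-1}$. Reading off \eqref{eqn: time-varying regularization error dynamics}, I would first note that $P_{k+1}^{-1}\tilde\theta_{k+1} = P_k^{-1}\tilde\theta_k + R_k(\theta_{{\rm reg},k}-\theta) - R_{k-1}(\theta_{{\rm reg},k-1}-\theta)$ for all $k \ge 1$. Setting $w_k \isdef P_k^{-1}\tilde\theta_k$, the right-hand side is a telescoping difference, so summing from the first step collapses all intermediate terms. Using the initialization \eqref{eqn: variable regularization P1 update}--\eqref{eqn: variable regularization theta1 update}, which under \eqref{eqn: linear measurement process no noise} gives $\tilde\theta_1 = P_1 R_0(\theta_{{\rm reg},0}-\theta)$ and hence $w_1 = R_0(\theta_{{\rm reg},0}-\theta)$, the telescoped sum leaves only the current term, yielding the closed form $\tilde\theta_{k+1} = P_{k+1}R_k(\theta_{{\rm reg},k}-\theta)$ for all $k \ge 0$. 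This explicit expression for the error is the crux of the whole argument.

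Next I would bound $P_{k+1}$ using the rank hypothesis \eqref{eqn: RLS variable regulariztion rank condition}. Unrolling \eqref{eqn: variable regularization Pk update} back to \eqref{eqn: variable regularization P1 update} gives $P_{k+1}^{-1} = R_k + \sum_{i=0}^{k}\phi_i^\rmT\Gamma_i\phi_i$. Because every summand $\phi_i^\rmT\Gamma_i\phi_i$ is positive semidefinite and $R_k \succeq 0$, the partial sums are nondecreasing in the Loewner order, so for all $k \ge k_{\rm rank}$ one has $P_{k+1}^{-1} \succeq \sum_{i=0}^{k_{\rm rank}}\phi_i^\rmT\Gamma_i\phi_i \succeq \alpha I_n$, where $\alpha \isdef \eigmin\left[ \sum_{i=0}^{k_{\rm rank}} \phi_i^\rmT\Gamma_i\phi_i \right] > 0$. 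This furnishes the uniform bound $\norm{P_{k+1}} \le 1/\alpha$ for all $k \ge k_{\rm rank}$. Since $(\norm{\theta_{{\rm reg},k}})_{k=0}^\infty$ is bounded, $M \isdef \sup_{k}\norm{\theta_{{\rm reg},k}-\theta}$ is finite, and the closed form gives $\norm{\tilde\theta_{k+1}} \le (M/\alpha)\norm{R_k}$ for $k \ge k_{\rm rank}$. As $R_k \to 0_{n\times n}$, the right-hand side tends to zero, which proves global attractivity to zero. For the finite-time statement, once $k \ge k_{\rm cut}$ we have $R_k = 0_{n \times n}$, so the closed form gives $\tilde\theta_{k+1} = 0_{n \times 1}$ exactly; taking $N = k_{\rm cut}+1$ in Definition \ref{defn: globally attractive to a point} establishes global finite-time attractivity to zero.

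The step I expect to require the most care is matching this closed-form computation to the quantifier ``for all $k_0 \ge 0$ and $x_{k_0} \in \BBR^n$'' in Definition \ref{defn: globally attractive to a point}. Premultiplying \eqref{eqn: time-varying regularization error dynamics} by $P_{k+1}^{-1}$ and telescoping from a general restart time $k_0$ instead gives $P_{k+1}^{-1}\tilde\theta_{k+1} = P_{k_0}^{-1}\tilde\theta_{k_0} + R_k(\theta_{{\rm reg},k}-\theta) - R_{k_0-1}(\theta_{{\rm reg},k_0-1}-\theta)$, in which the $R_k$ forcing term again vanishes as $R_k \to 0$, while the residual $P_{k_0}^{-1}\tilde\theta_{k_0} - R_{k_0-1}(\theta_{{\rm reg},k_0-1}-\theta)$ is carried forward and must be shown not to obstruct convergence. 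Along the trajectory generated by the algorithm this residual is identically zero --- which is exactly the cancellation producing $\tilde\theta_{k+1} = P_{k+1}R_k(\theta_{{\rm reg},k}-\theta)$ --- and the uniform bound $\norm{P_{k+1}} \le 1/\alpha$ is what allows the fading regularization to drive the error to zero rather than merely keeping it bounded. I would therefore organize the write-up around the premultiplied telescoping identity and the Loewner-monotonicity bound on $P_{k+1}$, presenting the explicit error formula as the key step and using \eqref{eqn: RLS variable regulariztion rank condition} solely as the device that keeps $P_{k+1}$ uniformly bounded while the regularization fades.
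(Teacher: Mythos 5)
Your proposal is correct and follows essentially the same route as the paper: the closed-form error expression $\tilde\theta_{k+1} = P_{k+1}R_k(\theta_{{\rm reg},k}-\theta)$ is exactly the paper's key identity (its equation \eqref{eqn: fading regularization error dynamics temp 1}), which the paper obtains directly from the batch solution \eqref{eqn: time-varying regularization batch least squares} rather than by telescoping the recursion, and your subsequent eigenvalue bound and finite-time argument match the paper's. The initial-condition subtlety you flag in your last paragraph is genuine but equally present in the paper's own proof, which likewise establishes convergence only along the trajectory generated by the algorithm's initialization.
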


\begin{proof}
It follows from \eqref{eqn: time-varying regularization batch least squares} and \eqref{eqn: linear measurement process no noise} that, for all $k \ge 1$,
\begin{align*}
    \theta_{k+1} &= (R_k + S_k)^{-1}( R_k \theta_{{\rm reg},k} + S_k \theta)
    \\
    &= (R_k + S_k)^{-1}( R_k \theta_{{\rm reg},k} - R_k \theta + R_k \theta + S_k \theta)
    \\
    &= (R_k + S_k)^{-1}[ R_k (\theta_{{\rm reg},k} - \theta) + ( R_k + S_k) \theta]
    \\
    &= \theta + (R_k + S_k)^{-1} R_k (\theta_{{\rm reg},k} - \theta),
\end{align*}
where $S_k \triangleq \sum_{i=0}^{k} \phi_i^\rmT \Gamma_i \phi_i$ for brevity.
Subtracting $\theta$ from both sides, it then follows from \eqref{eqn: parameter estimaton error defn} that, for all $k \ge 1$,
\begin{align}
\label{eqn: fading regularization error dynamics temp 1}
    \tilde{\theta}_{k+1} = (R_k + \sum_{i=0}^k \phi_i^\rmT \Gamma_i \phi_i)^{-1} R_k (\theta_{{\rm reg},k} - \theta)
\end{align}
For all $k \ge 1$, since $R_k$ and $R_k + \sum_{i=0}^k \phi_i^\rmT \Gamma_i \phi_i$ are positive semidefinite, it follows that
\begin{align}
\label{eqn: variable regularization global attractivity temp 1}
    \Vert \tilde{\theta}_{k+1} \Vert \le  
    \frac{\eigmax(R_k)}{\eigmin\left( R_k + \sum_{i=0}^k \phi_i^\rmT \Gamma_i \phi_i \right)} 
    \Vert \theta_{{\rm reg},k} - \theta \Vert.
\end{align}
Next, it follows from triangle inequality that, for all $k \ge 1$,
\begin{align}
\label{eqn: variable regularization global attractivity temp 2}
    \Vert \theta_{{\rm reg},k} - \theta \Vert \le \Vert \theta_{{\rm reg},k} \Vert + \Vert \theta \Vert.
\end{align}
Furthermore, it follows from \eqref{eqn: RLS variable regulariztion rank condition} that, for all $k \ge k_{\rm rank}$,
\begin{align}
\label{eqn: variable regularization global attractivity temp 3}
    \eigmin\big( R_k + \sum_{i=0}^k \phi_i^\rmT \Gamma_i \phi_i \big) \ge  \eigmin\big( \sum_{i=0}^{k_{\rm rank}} \phi_i^\rmT \Gamma_i \phi_i \big) > 0.
\end{align}
Substituting \eqref{eqn: variable regularization global attractivity temp 2} and \eqref{eqn: variable regularization global attractivity temp 3} into \eqref{eqn: variable regularization global attractivity temp 1}, it follows that, for all $k \ge k_{\rm rank}$,
\begin{align}
\label{eqn: variable regularization global attractivity temp 4}
    \Vert \tilde{\theta}_{k+1} \Vert \le  
    \frac{\eigmax(R_k)}{\eigmin\left( \sum_{i=0}^{k_{\rm rank}} \phi_i^\rmT \Gamma_i \phi_i  \right)} 
    (\Vert \theta_{{\rm reg},k} \Vert + \Vert \theta \Vert).
\end{align}
Finally, since $(\Vert \theta_{{\rm reg},k} \Vert ) _{k=0}^\infty$ is bounded and $\lim_{k \rightarrow \infty} R_k = 0_{n \times n}$ implies that $\lim_{k \rightarrow \infty} \eigmax(R_k) = 0$, it follows from \eqref{eqn: variable regularization global attractivity temp 4} that $\lim_{k \rightarrow \infty} \Vert \tilde{\theta}_{k} \Vert = 0$ and hence $\lim_{k \rightarrow \infty} \tilde{\theta}_{k} = 0_{n \times 1}$.
Thus, \eqref{eqn: time-varying regularization error dynamics} is globally attractive to zero.

Next, suppose there exists $k_{\rm cut} \ge k_{\rm rank}$ such that, for all $k \ge k_{\rm cut}$, $R_k = 0_{n \times n}$.
It follows from \eqref{eqn: RLS variable regulariztion rank condition} that, for all $k \ge k_{\rm cut}$, 
\begin{align*}
    \sum_{i=0}^k \phi_i^\rmT \Gamma_i \phi_i \succeq \sum_{i=0}^{k_{\rm rank}} \phi_i^\rmT \Gamma_i \phi_i \succ 0_{n \times n},
\end{align*}
and hence $\sum_{i=0}^k \phi_i^\rmT \Gamma_i \phi_i$ is nonsingular. 
Therefore, it follows from \eqref{eqn: fading regularization error dynamics temp 1} that, for all $k \ge k_{\rm cut}$, 
\begin{align*}
    \tilde{\theta}_{k+1} = (\sum_{i=0}^k \phi_i^\rmT \Gamma_i \phi_i)^{-1} 0_{n \times n} (\theta_{{\rm reg},k} - \theta) = 0_{n \times 1}.
\end{align*}
Thus \eqref{eqn: time-varying regularization error dynamics} is globally finite-time attractive to zero.
\end{proof}

\section{Fading-Regularization (FR) RLS}
This section proposes \textit{fading-regularization RLS} (FR-RLS), a simple algorithm that satisfies the conditions of Theorem \ref{theo: time varying regularization RLS global attractivity} sufficient for global finite-time attractivity to zero.
To begin, let $k_{\rm cut} \in \BBN_0 \cup \{\infty\}$, let $\mu \in (0,1)$, let $R_0 \in \BBR^{n \times n}$ be positive definite and, for all $k \ge 1$, define
\begin{align}
    R_k \triangleq \begin{cases}
        \mu^k R_0, & k < k_{\rm cut},
        \\
        0_{n \times n}, & k \ge k_{\rm cut}.
    \end{cases}
    \label{eqn: fading regularization Rk defn}
\end{align}
Similarly to classical RLS, the user selects a positive-definite regularization matrix $R_0$. 
In fading-regularization RLS, the user additional selects the \textit{fading factor} $\mu$ which controls how quickly regularization decays, and the \textit{regularization cutoff step} $k_{\rm cut}$, after which regularization is set to zero.

It follows from \eqref{eqn: variable regularization Pk update} that, for all $k \ge 0$, $P_{k+1}^{-1}$ is given by
\begin{align}
    & P_{k+1}^{-1} = \nonumber 
    \\
    & \begin{cases}
        R_k + \phi_k^\rmT \Gamma_k \phi_k, & k = 0,
        \\
        P_{k}^{-1} + \phi_k^\rmT \Gamma_k \phi_k - \mu^{k-1}(1-\mu)R_0, & 0 < k < k_{\rm cut},
        \\
        P_{k}^{-1} + \phi_k^\rmT \Gamma_k \phi_k - \mu^{k-1} R_0, & k = k_{\rm cut},
        \\
        P_{k}^{-1} + \phi_k^\rmT \Gamma_k \phi_k, & k > k_{\rm cut}.
    \end{cases}
\end{align}
Moreover, $\theta_1$ is still computed using \eqref{eqn: variable regularization theta1 update} and, for all $k \ge 1$, $\theta_{k+1}$ is computed using \eqref{eqn: variable regularization thetak update}.
Note, however, that to compute $\theta_{k+1}$ by \eqref{eqn: variable regularization thetak update}, it is necessary to first compute $P_{k+1}$. 
%
%
%
Computing $P_{k+1}$ by directly inverting $P_{k+1}^{-1}$ is an $\mathcal{O}(n^3)$ operation, which may be computationally expensive if $n \gg p$. It may be computationally faster in practice to compute the product $P_{k+1}[\cdots]$ in \eqref{eqn: variable regularization thetak update} using a Cholesky solver or LU solver instead of direct inversion of $P_{k+1}^{-1}$. However, this is still an $\mathcal{O}(n^3)$ operation.

Directly inverting $P_{k+1}^{-1}$ can be avoided for $k > k_{\rm cut}$ by using the matrix inversion lemma, given by Lemma \ref{lem: matrix inversion lemma}.
It follows from Lemma \ref{lem: matrix inversion lemma} for all $k \ge k_{\rm cut}$, $P_{k+1}$ can be expressed as
\begin{align}
\label{eqn: RLS Pk update}
    P_{k+1} = P_k - P_k \phi_k^\rmT(\Gamma_k^{-1} + \phi_k P_k \phi_k^\rmT)^{-1} \phi_k P_k.
\end{align}
Note that this is the same as the classical RLS update equation for $P_{k+1}$ since there is no time-varying regularization for $k > k_{\rm cut}$.
Using \eqref{eqn: variable regularization thetak update} and \eqref{eqn: RLS Pk update}, it follows that, for all $k > k_{\rm cut}$, $\theta_{k+1}$ can be computed recursively in $\mathcal{O}(p n^2)$ complexity.
However, for all $0 < k \le k_{\rm cut}$, matrix inversion lemma cannot improve computational complexity to compute $P_{k+1}$ since $\rank(R_{k} - R_{k-1}) = n$.
Hence, for all $0 < k \le k_{\rm cut}$, $\theta_{k+1}$ must be computed in $\mathcal{O}(n^3)$ complexity, which may be significantly slower than $\mathcal{O}(p n^2)$ if $n \gg p$.

\section{Rank-1 Fading-Regularization (R1FR) RLS}
The $\mathcal{O}(n^3)$ computational complexity of FR-RLS motivates \textit{rank-1 fading-regularization RLS} (R1FR-RLS), an algorithm that satisfies the conditions of Theorem \ref{theo: time varying regularization RLS global attractivity} sufficient for global finite-time attractivity to zero, but which runs in the same time complexity as classical RLS.
The key idea behind this algorithm is to design the regularization matrix such that, for all $k \ge 1$, $\rank(R_k - R_{k-1}) \le 1$.

To begin, let $j_{\rm cut} \in \BBN_0 \cup \{\infty\}$, let $\mu \in (0,1)$, and let $R_0 \in \BBR^{n \times n}$ be positive definite.
Let $d_{0,1},\hdots,d_{0,n} \in \BBR$ and $v_{0,1},\hdots,v_{0,n} \in \BBR^{n \times 1}$ respectively be eigenvalues and eigenvectors of $R_0$ satisfying
\begin{align}
    R_0 = d_{0,1} v_{0,1} v_{0,1}^\rmT + \hdots + d_{0,n} v_{0,n} v_{0,n}^\rmT.
\end{align}
Next, for all $j \ge 0$ and for all $l = 0, \hdots, n-1$, define
\begin{align}
\label{eqn: Rank 1 fading regularization Rk defn}
    & R_{jn + l} \triangleq \nonumber
    \\
    & 
    \begin{cases}
        \mu^{jn} ( \mu \sum_{i=1}^{l} d_{0,i} v_{0,i} v_{0,i}^\rmT  + \sum_{i=l+1}^{n} d_{0,i} v_{0,i} v_{0,i}^\rmT ), & j < j_{\rm cut},
        \\
        \mu^{jn} \sum_{i=l+1}^{n} d_{0,i} v_{0,i} v_{0,i}^\rmT, & j = j_{\rm cut},
        \\
        0_{n \times n}, & j > j_{\rm cut}.
    \end{cases}    
\end{align}
For all $k \ge 0$, since $R_k$ is a linear combination of $\{v_{0,i} v_{0,i}^\rmT\}_{i=1}^n$, it follows that $R_k$ is positive semidefinite. 
Furthermore, note that, for all $0 \le j \le j_{\rm cut}$, 
\begin{align}
    R_{jn} = \mu^{jn} R_0.
\end{align}
Hence, the regularization term in rank-1 fading-regularization is the same as that of fading regularization every $n$ steps.
Next, note that, for all $k \ge 1$, $R_k$ can be computed recursively as
\begin{align}
    R_{k} = \begin{cases}
        R_{k-1} - c_k v_{0,l} v_{0,l}^\rmT, & k \le (j_{\rm cut}+1)n,
        \\
        0_{n \times n}, & k > (j_{\rm cut}+1)n,
    \end{cases}
    \label{eqn: rank 1 fading regularization Rk update}
\end{align}
where
\begin{align}
        c_k &= \begin{cases}
        \mu^{jn} (1-\mu^n) d_{0,l}, & 0 < k \le Mn,
        \\
        \mu^{jn} d_{0,l}, & Mn < k \le (j_{\rm cut}+1) n,
    \end{cases}
\end{align}
and where $j \triangleq \lfloor \frac{k-1}{n} \rfloor$ and $l \triangleq (k-1) \Mod n$.
It is evident from \eqref{eqn: rank 1 fading regularization Rk update} that, for all $k \ge 1$, $R_k - R_{k-1}$ is a scalar multiple of $v_{0,l} v_{0,l}^\rmT$. 
Thus, for all $k \ge 1$, $\rank(R_k - R_{k-1}) \le 1$.
Next, it follows from \eqref{eqn: variable regularization P1 update}, \eqref{eqn: variable regularization Pk update}, and \eqref{eqn: rank 1 fading regularization Rk update} that, for all $k \ge 0$, $P_{k+1}^{-1}$ is given by
\begin{align}
    & P_{k+1}^{-1} = \nonumber
    \\
    & \begin{cases}
        R_0 + \phi_0^\rmT \Gamma_0 \phi_0,   &    k = 0,
        \\
        P_{k}^{-1} + \phi_k^\rmT \Gamma_k \phi_k - c_k v_{0,l} v_{0,l}^\rmT, & 0 < k \le (j_{\rm cut}+1) n,
        \\
         P_{k}^{-1} + \phi_k^\rmT \Gamma_k \phi_k, & k > (j_{\rm cut}+1)n.
    \end{cases}
\end{align}
Furthermore, note that, for all $0 < k \le (j_{\rm cut}+1)n$,
\begin{align}
    \phi_k^\rmT \Gamma_k \phi_k - c_k v_{0,l} v_{0,l}^\rmT 
    = 
    \begin{bmatrix}
        \phi_k^\rmT
        &
        v_{0,l}
    \end{bmatrix}
    \begin{bmatrix}
        \Gamma_k & 0_{p \times 1}
        \\
        0_{1 \times p} & -c_k
    \end{bmatrix}
    \begin{bmatrix}
        \phi_k
        \\
        v_{0,l}^\rmT
    \end{bmatrix}.
\end{align}
Thus, it follows from Lemma \ref{lem: matrix inversion lemma} that, for all $k \ge 0$, $P_{k+1}$ is given by
\begin{align}
    P_{k+1} = P_k - P_k \bar{\phi}_k^\rmT (\bar{\Gamma}_k^{-1} + \bar{\phi}_k P_k \bar{\phi}_k^\rmT)^{-1}\bar{\phi}_k P_k,
    \label{eqn: rank 1 fading regularization Pk update}
\end{align}
where $\bar{\phi}_k \in \BBR^{p \times n} \cup \BBR^{(p+1) \times n}$ is defined
\begin{align}
    \bar{\phi}_k \triangleq 
    \begin{cases}
        \phi_k, & k = 0,
        \\
        \begin{bmatrix}
            \phi_k
            \\
            v_{0,l}^\rmT
        \end{bmatrix},
        & 
        0 < k \le (j_{\rm cut}+1)n,
        \\
        \phi_k, & k \ge (j_{\rm cut}+1)n,
    \end{cases}
\end{align}
where $\bar{\Gamma}_k \in \BBR^{p \times p} \cup \BBR^{(p+1) \times {p+1}}$ is defined
\begin{align}
    \bar{\Gamma}_k \triangleq 
    \begin{cases}
    \Gamma_k, & k = 0,
    \\
        \begin{bmatrix}
            \Gamma_k, & 0_{p \times 1}
            \\
            0_{1 \times p} & -c_k
        \end{bmatrix},
        & 
        0 < k \le (j_{\rm cut}+1)n,
        \\
        \Gamma_k, & k \ge (j_{\rm cut}+1)n,
    \end{cases}
\end{align}
and where $P_0 \triangleq R_0^{-1}$.
The computational complexity to compute $P_{k+1}$ using \eqref{eqn: RLS Pk update} is $\mathcal{O}((p+1)n^2)$ when $0 < k \le (j_{\rm cut}+1)n$ and $\mathcal{O}(p n^2)$ otherwise.
Therefore, for all $k \ge 0$, using \eqref{eqn: variable regularization theta1 update}, \eqref{eqn: variable regularization thetak update}, and \eqref{eqn: rank 1 fading regularization Pk update}, $\theta_{k+1}$ can be computed recursively in at most $\mathcal{O}((p+1)n^2)$ complexity.

\section{Numerical Examples}

\begin{example}[No Measurement Noise.]
\label{example: time-varying regularization no measurement noise}
    Consider $n = 100$ parameters and $p = 2$ measurements per step.
    The true parameters $\theta \in \BBR^{100}$ are sampled from $\mathcal{N}(0_{100 \times 1},I_{100})$.
    Let $R_0 = r_0 I_{100}$ with $r_0 = 1$ and, for all $k \ge 0$, let $\theta_{{\rm reg},k} = 0_{100 \times 1}$.
    We consider three choices of regularization:
    \begin{enumerate}
        \item For all $k \ge 0$, $R_k = R_0$. We label this {RLS}.
        \item For all $k \ge 0$, $R_k$ is given by \eqref{eqn: fading regularization Rk defn} with $\mu = 0.99$ and $k_{\rm cut} = 201$. We label this {FR-RLS}.
        \item For all $k \ge 0$, $R_k$ is given by \eqref{eqn: Rank 1 fading regularization Rk defn} with $\mu = 0.99$ and $j_{\rm cut} = 1$. We label this {R1FR-RLS}.
    \end{enumerate}
    We choose $k_{\rm cut} = 201$ and $j_{\rm cut} = 1$ for easy comparison of FR-RLS and R1FR-RLS as in both cases, $R_k = 0_{100 \times 100}$ for all $k \ge 201$.

    Next, for all $k \ge 0$, let $\Gamma_k = I_p$.
    We consider two cases of regressors. First, consider that, for all $k \ge 0$, the rows of $\phi_k \in \BBR^{2 \times 100}$ are i.i.d. sampled from $\mathcal{N}(0_{100 \times 1},I_{100})$.
    We label this \textit{PE Data} as the regressors are persistently exciting.
    Next, consider that, for all $0 \le k \le 100$, the rows of $\phi_k \in \BBR^{2 \times 100}$ are i.i.d. sampled from $\mathcal{N}(0_{100 \times 1},I_{100})$, while, for all $k > 100$, $\phi_k = 0_{2 \times 100}$. 
    We label this \textit{Non-PE Data}.
    Finally, for all $k \ge 0$, let $y_k = \phi_k \theta$ and let $\theta_{k+1}$ be given by \eqref{eqn: variable regularization theta1 update} and \eqref{eqn: variable regularization thetak update}.
    The norm of parameter estimation error, $\Vert \theta_k - \theta \Vert$, is plotted in Figure \ref{fig: time varying regularization no noise} for these six cases.
    \begin{figure}[ht]
        \centering
        \includegraphics[width=\linewidth]{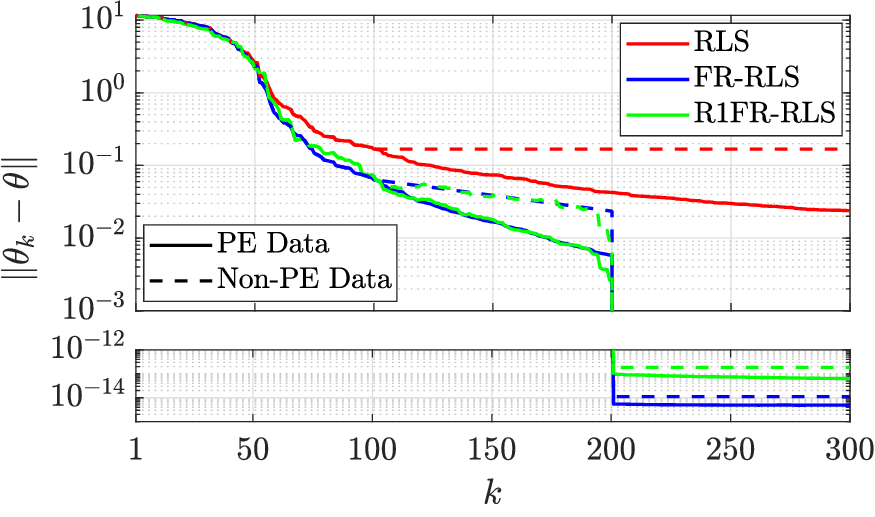}
        \caption{Example 1: Parameter estimation error, $\Vert \theta_k - \theta \Vert$ versus time step $k$ for RLS, fading regularization RLS, and rank-1 fading regularization RLS. 
        Solid lines indicate that data is persistently exciting, while dashed lines indicates that data is not persistently exciting for $k \ge 100$.
        }
        \label{fig: time varying regularization no noise}
    \end{figure}
    
    Note that in RLS, parameter estimation error does converge to zero in the non-PE case, and only converges asymptotically in the PE case. This is indicative of regularization-induced bias.
    In FR-RLS and R1FR-RLS, in both the PE and non-PE case, parameter estimation error converges to zero, within numerical precision.
    Moreover, convergence happens in finite time at step $k = 201$, due to cutoff of regularization.
    This verifies the results of Theorem \ref{theo: time varying regularization RLS global attractivity}.
    $\hfill\mbox{$\diamond$}$
\end{example}
\begin{example}[White Measurement Noise.]
    Consider the same setup as the PE Data case of Example \ref{example: time-varying regularization no measurement noise}.
    However, for all $k \ge 0$, we let $y_k = \phi_k \theta + w_k$, where $w_k$ is i.i.d. sampled from $\mathcal{N}(0_{p \times 1},I_p)$.
    While many methods exist to choose a suitable regularization term based on properties of the measurement noise, e.g. \cite{hoerl1970ridge,bauer2007regularization,ballal2017bounded}, we will assume properties of the measurement noise are not well known.
    We consider the three values $r_0 = 0.01$, $r_0 = 1$, and $r_0 = 100$, and initialize $R_0 = r_0 I_{100}$.
    
    For each value of $r_0$, we run the RLS, FR-RLS, and R1FR-RLS algorithms, as detailed in Example \ref{example: time-varying regularization no measurement noise}.
    Each algorithm is run for 1000 independent trials. 
    Figure \ref{fig: time-varying regularization with noise} plots the mean (line) and 95\% confidence interval (shaded) of $\Vert \theta_k - \theta \Vert$ over the 1000 trials.
    \begin{figure}[ht]
        \centering
        \includegraphics[width=\linewidth]{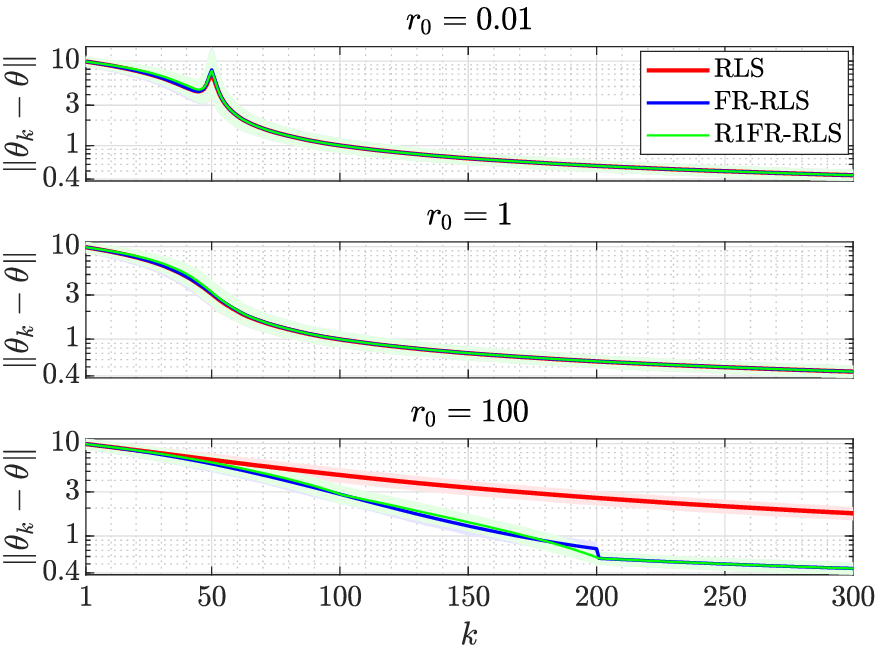}
        \caption{Exmaple 2: Parameter estimation error, $\Vert \theta_k - \theta \Vert$ versus time step $k$ for RLS, fading regularization, and rank-1 fading regularization, over 1000 independent trials.
        Solid line shows the mean of the 1000 trials and shaded shows the 95\% confidence interval.
        }
        \label{fig: time-varying regularization with noise}
    \end{figure}

    In this example, $r_0 = 0.01$ results in under-regularization, creating a sensitivity to measurement noise at $k = 49$, where $\sum_{i=0}^{k} \phi_i^\rmT \Gamma_i \phi_i$ attains full rank.
    $r_0 = 1$ gives a good amount of regularization while $r_0 = 100$ results in over-regularization.
    In the $r_0 = 0.01$ and $r_0 = 1$ cases, RLS, FR-RLS, and R1FR-RLS all perform nearly identically. 
    Since bias due to regularization is relatively small in these cases, adding a time-varying regularization term has little impact on performance. 
    In the $r_0 = 100$ case, there is significant regularization-induced bias due to a large regularization term.
    This results in slow identification in RLS.
    FR-RLS and R1FR-RLS, however, gradually reduce this bias during $1 \le k \le 200$, and remove this bias altogether at $k = 201$. 
    
    Therefore, FR-RLS and R1FR-RLS have minimal impact on performance in the case of under-regularization or well-tuned regularization, while gradually removing bias in the case of over-regularization.
    Hence, these time-varying regularization schemes can be added to existing RLS identification to protect against over-regularization when properties of measurement noise are not well known.

    \begin{figure}[ht]
        \centering
        \includegraphics[width=\linewidth]{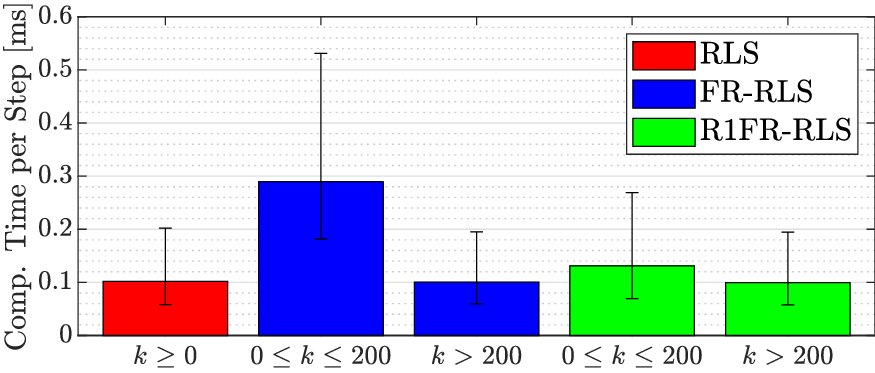}
        \caption{Example 2: Computation time per step in milliseconds of RLS, fading regularization, and rank-1 fading regularization.
        Fading regularization and rank-1 fading regularization are divided into $0 \le k \le 200$ (time-varying regularization), and $k > 200$ (constant zero regularization).
        Error bars give the 95\% confidence intervals.
        }
        \label{fig: time-varying regularization computation time}
    \end{figure}
    Finally, Figure \ref{fig: time-varying regularization computation time} shows the computation time per step of RLS, FR-RLS and R1FR-RLS.
    Since FR-RLS and R1FR-RLS have constant regularization for $k > 200$, we see very similar computation time per step between RLS, FR-RLS for $k > 200$, and R1FR-RLS for $k > 200$. 
    The computation time per step of FR-RLS during $0 \le k \le 200$ is approximately three times that of RLS, a result of $\mathcal{O}(n^3)$ complexity per step.
    This difference in computation time is further exaggerated as $n \gg p$.
    R1FR-RLS reduces computation cost during $0 \le k \le 200$ to comparable to RLS, a result of $\mathcal{O}((p+1)n^2)$ complexity using matrix inversion lemma.
    $\hfill\mbox{$\diamond$}$
\end{example}
\section{Conclusion}
This work presented RLS with time-varying regularization, sufficient conditions for global finite-time attractivity of parameter estimation error to zero, and two algorithms that satisfy these sufficient conditions: fading regularization RLS and rank-1 fading regularization RLS.
Numerical examples validate these theoretical guarantees and show how fading regularization can protect against over-regularization.
While this is one potential use case, this work also opens up the potential to new time-varying regularization designs. 
Another area of future interest is augmenting time-varying regularization with the existing literature of RLS forgetting algorithms.
%


\bibliographystyle{IEEEtran}
\bibliography{refs}

\appendix
\begin{lema}{A.1}
\label{lem: quadratic cost minimizer}
Let $A \in \BBR^{n \times n}$ be positive definite, let $b \in \BBR^n$ and $c \in \BBR$, and define $f\colon\BBR^n \rightarrow \BBR$ by $ f(x) \triangleq x^\rmT A x + 2 b^\rmT x + c$.
%
%
Then, $f$ has a unique global minimizer given by $\argmin_{x \in \BBR^n} f(x) = -A^{-1} b.$
\end{lema}
\begin{lema}{A.2}[Matrix Inversion Lemma]
\label{lem: matrix inversion lemma}
Let $A \in \BBR^{n \times n}$, $U \in \BBR^{n \times p}$, $C \in \BBR^{p \times p}$, $V \in \BBR^{p \times n}$. Assume $A$, $C$, and $A+UCV$ are nonsingular. Then, $(A+UCV)^{-1} = A^{-1} - A^{-1}U(C^{-1} + VA^{-1} U)^{-1} V A^{-1}$.
\end{lema}

\end{document}